% This is samplepaper.tex, a sample chapter demonstrating the
% LLNCS macro package for Springer Computer Science proceedings;
% Version 2.20 of 2017/10/04
%
\documentclass[runningheads]{llncs}
\usepackage{graphicx}
% Used for displaying a sample figure. If possible, figure files should
% be included in EPS format.
%
% If you use the hyperref package, please uncomment the following line
% to display URLs in blue roman font according to Springer's eBook style:
\usepackage{hyperref,xcolor}
\hypersetup{
    colorlinks = true,
    linkcolor = {blue},
    citecolor = {blue},
}

\usepackage{mathtools}
\usepackage{listings}
\usepackage{proof}
\usepackage{amsfonts}
\usepackage{amssymb}
\usepackage{stmaryrd}

\begin{document}
\title{Extending Concurrent Separation Logic to Enhance Modular Formalization
% \thanks{Supported by organization x.}
}
%
%\titlerunning{Abbreviated paper title}
% If the paper title is too long for the running head, you can set
% an abbreviated paper title here
%
% \author{Yepeng Ding\orcidID{1111-2222-3333-4444} \and
% Hiroyuki Sato\orcidID{1111-2222-3333-4444}}
\author{Yepeng Ding \and
Hiroyuki Sato}
\authorrunning{Y. Ding and H. Sato}
% First names are abbreviated in the running head.
% If there are more than two authors, 'et al.' is used.
%
\institute{The University of Tokyo, Tokyo, Japan\\
\email{\{youhoutei,schuko\}@satolab.itc.u-tokyo.ac.jp}}
\maketitle              % typeset the header of the contribution
\begin{abstract}
Nowadays, numerous services based on large-scale distributed systems have been developed to boost the convenience of human life. On the other side, it becomes a significant challenge to ensure the correctness and properties of these systems due to the complex and nested architecture. Although concurrent separation logic (CSL) has partially tackled the problem by specifying systems and verifying the correctness of them, it faces modularity issues. In this paper, we propose an extended concurrent separation logic (ECSL) to address the modularity issues of CSL with the support of the temporal extension, communication extension, environment extension, and nest extension. ECSL is capable of formalizing systems at different abstraction levels from memory management to architecture and protocol design with great modularity. Furthermore, we stick to unitarity and compatibility principles while developing ECSL.

\keywords{Extended concurrent separation logic \and Modular formalization \and System verification}
\end{abstract}
\section{Introduction}

Thanks to the advancement of distributed computing techniques, numerous large-scale systems have been developed to provide diverse services benefiting people in daily life. The architecture of these distributed systems is complex that consists of a set of heterogeneous subsystems. Particularly, with the advancement of the distributed ledger technology, the nested architectures become ubiquitous in decentralized systems \cite{androulaki_hyperledger_2018} \cite{ding_dagbase_2020}. Besides, these systems collaborate with each other by complex protocols \cite{ding_derepo_2020} \cite{ding_bloccess_2020} based on peer communications and cross-layer communications  to satisfy the requirements. However, for such a complex distributed system, it is a challenge to ensure correctness and system properties.

Reasoning about distributed systems in a sound logic plays an imperative role in proving correctness and properties. As an extension of Hoare logic, separation logic (SL) introduces the separating conjunction to provide modular reasoning about systems. SL was established in papers \cite{ohearn_logic_1999} \cite{ishtiaq_bi_2001}, and firmly developed by John C. Reynolds \cite{reynolds_separation_2002}. The intention of SL is to reason about resources generally and verify the correctness of memory usage, specifically random access memory (RAM), by merging the logic model and the engineering model, which presents high value in program verification. To make SL more expressive, concurrent separation logic (CSL) was advanced by Peter W. O’Hearn \cite{ohearn_resources_2007} to reason about concurrent programs, which makes it possible to formalize thread-level or process-level parallelisms. CSL has been mechanized by recent research \cite{jung_iris_2015} \cite{krebbers_essence_2017}, \cite{bizjak_iron_2019}, which proves effectiveness and powerful expressiveness in the formalization and verification of parallel systems and distributed systems.

Nevertheless, the standard CSL does not support modularity well, which reflects in three aspects.

\begin{itemize}
\item The standard CSL only has spatial modularity, which leads to a barrier to tackle temporal problems.
\item The standard CSL focuses on the low-level formalization such as memory management, which lacks modular components such as communications.
\item The standard CSL is restricted to provide the support of nested formalization with modularity to specify and verify nested systems consisting of multiple parallel layers.
\end{itemize}

In this paper, we focus on the methodology of formalizing systems at different abstraction levels with great modularity. We propose an extended concurrent separation logic (ECSL) that enhances the modularity with the support of the temporal extension, communication extension, environment extension, and nest extension. Our principles for extending the CSL with modularity are twofold. On the one hand, we need to ensure unitarity that is to unify the specification and verification of systems at different abstraction levels. On the other hand, we need the compatibility of ECSL to permit the interpretation of CSL and typical variants.

Our logic ECSL makes the following main contributions:

\begin{enumerate}
\item ECSL has spatiotemporal modularity to facilitate both spatial and temporal reasoning with the support of the temporal extension. Furthermore, it can embed a temporal logic to formulate the temporal properties.
\item ECSL is capable of formalizing systems at different abstraction levels, especially complex systems with a nested architecture and a set of communication protocols with the support of communication extension, environment extension, and nest extension. Particularly, environment extension enables the specification to perceive the environment factors.
\item ECSL follows the unitarity and compatibility principles. It can be implemented in a specification language to formalize systems developed with an expressive programming language.
\end{enumerate}

\section{Related Work}
Since the establishment of CSL, modular reasoning of CSL has been a highlight to simplify the formalization of concurrent programs. O’Hearn addressed the importance of local reasoning in his early work \cite{ohearn_resources_2007}. Parallel mergesort was formalized in CSL with \textit{Parallel Composition Rule} to present the elegance of independent reasoning. Besides the restrictive reasoning about disjoint concurrency, the author introduced the concept of resource invariant to enable CSL to formalize inter-process interactions. A pointer-transferring buffer example was used to prove the effectiveness. Furthermore, Smallfoot \cite{berdine_smallfoot_2005} demonstrated the feasibility of mechanizing modular reasoning about concurrent programs with several detailed examples mechanized in Smallfoot. Since then, the CSL has been applied to verify a wide range of applications \cite{chen_using_2015}.

Many variants of CSL has been proposed to enhance the capability of CSL with the support of modular reasoning. In \cite{bell_concurrent_2010}, a CSL supporting shared channel endpoints was proposed for pipelined parallelization. The authors made the use of modular reasoning to formalize the asynchronous channels. Although it extended the standard CSL with communication support, it fails to solve the temporal modularity issue and high-level formalization issue.

The temporal modularity was tackled in \cite{sergey_programming_2017} by proposing DISEL, a framework for the compositional verification of distributed systems with their clients. This work extends the CSL with the capability of formalizing the distributed protocols. However, the formalization of systems consisting of multiple parallel layers is still a challenge.

Recently, Iris Project, a higher-order CSL, was developed \cite{jung_iris_2015,jung_higher-order_2016,krebbers_essence_2017}, which highly extended CSL and enhanced modularity both in aspects of temporal formalization and high-level formalization. Particularly, Actris \cite{hinrichsen_actris_2019} and Aneris \cite{krogh-jespersen_aneris_2020} were proposed for reasoning about message passing and node-local resources in distributed systems. However, Aneris still struggles with the formalization of nested parallel architecture.

\section{Concurrent Separation Logic}
In this section, we introduce some critical concepts of CSL as the preliminary of our logic.

As an extension of SL, CSL acts as a concurrent program logic in proving the correctness properties of concurrent programs. It still uses a Hoare triple style as the form for proving specifications: $\{ P \} ~\alpha~ \{ P' \}$, $P$ and $P'$ denote pre-condition and post-condition respectively while $\alpha$ denotes the action that changes the state of the program.

We consider a general CSL introduced in \cite{vafeiadis_concurrent_2011}. Program state is defined as a tuple $\langle S,H \rangle$, where $S$ denotes the stack and $H$ denotes the heap. A specification language is structured in Definition~\ref{def:spec_lang_csl}.

\begin{definition}[Syntax of Specification Language of CSL]
\label{def:spec_lang_csl}
Let $\bar{E}$ and $\bar{B}$ denote arithmetic and Boolean expressions respectively. The structure of the assertion $P$ and $P'$ are defined as follows:

$\check{P}, \check{P}' ::= \textbf{emp} ~|~ \bar{B} ~|~ \bar{E}_1 \mapsto \bar{E}_2 ~|~ \check{P} \land \check{P}' ~|~ \check{P} \lor \check{P}' ~|~ \neg \check{P} ~|~ \check{P} \implies \check{P}' ~|~ \check{P} * \check{P}' ~|~ \check{P} \mathrel{-\mkern-6mu*} \check{P}'$
\end{definition}

Separating conjunction $*$ and separation implication $\mathrel{-\mkern-6mu*}$ are two critical operators with special semantics. We define a modeling relation $(s, h) \models \check{P}, s \in S, h \in H$ meaning that the program state $(s, h)$ satisfies the assertion $\check{P}$.

\begin{definition}[Semantics of Assertions]
\label{def:sem_CSL}
The semantics of assertions in CSL is given as follow:

$(s,h) \models \textit{emp} \iff \textit{Dom}(h) = \emptyset$

$(s,h) \models \bar{E}_1 \mapsto \bar{E}_2 \iff \textit{Dom}(h) = \llbracket \bar{E}_1 \rrbracket_s \land h(\llbracket \bar{E}_1 \rrbracket_s) = \llbracket \bar{E}_2 \rrbracket_s $

$(s,h) \models \check{P} * \check{P}' \iff \exists h_1,h_2: h=h_1 \uplus h_2 \land (s,h_1) \models \check{P} \land (s,h_2) \models \check{P}'$

$(s,h) \models \check{P} \mathrel{-\mkern-6mu*} \check{P}' \iff \forall h_1: (\widetilde{h \uplus h_1}) \land (s,h_1) \models \check{P} \implies (s,h \uplus h_1) \models \check{P}'$

$\llbracket \bar{E} \rrbracket_s \triangleq s(\bar{E})$

$\widetilde{h} \triangleq h \text{ is defined}$
\end{definition}

In CSL, an important proof rule is \textit{Parallel Composition Rule} given as follow:

\begin{align*}
\infer{\{ \circledast_{i=0}^n P_i \} ~\alpha_0 \parallel ... \parallel \alpha_n ~ \{ \circledast_{i=0}^n P_i' \} }
{\{ P_0 \}~\alpha_0~\{ P_0' \}~...~\{ P_n \}~\alpha_n~\{ P_n' \} } \\
\text{(Parallel Composition Rule)}
\end{align*}

$\circledast_i^n$ denotes consecutive separating conjunction from index $i$ to $n$.

This rule is the key to formalize the disjoint concurrency with the support of completely local reasoning about processes in a parallel program. Furthermore, CSL gives \textit{Critical Region Rule} to reason about the inter-process interaction.

\begin{align*}
\infer{\{ P \} \text{ with } r \text{ when } B \text{ do } \alpha~\{ P' \} }
{\{ (P * \textit{RI}_r) \land B \}~\alpha~\{ P' * \textit{RI}_r \} } \\
\text{(Critical Region Rule)}
\end{align*}

Here, $\textit{RI}_r$ denotes the resource invariant and $B$ is the guard. Resource $r$ provides a mutual exclusion for different interactions with critical regions in a program.

We can obtain that CSL supports modular reasoning by introducing separation operators. However, a general CSL only focuses on the spatial formalization with low-level modular reasoning about the memory.

\section{Extended Concurrent Separation Logic}

\subsection{Program Model}
To illustrate our logic, we firstly define a program model in Definition \ref{def:program}.

\begin{definition}
\label{def:program}
A program over set $V$ of typed variables is defined as

$\mathfrak{P} \triangleq (L, A, \mathcal{E}, \hookrightarrow, L_0, g_0)$,

where $L$ is a set of code locations and $A$ is a set of actions. $\mathcal{E}$ denotes the effect function $A \times \llbracket V \rrbracket \mapsto \llbracket V \rrbracket$. The notation ${\hookrightarrow} \subseteq L \times \| V \| \times A \times L$ represents the conditional transition relation. $L_0 \subseteq L$ and $g_0 \in \| V \|$ denotes a set of initial locations and the initial condition respectively. $\llbracket V \rrbracket$ denotes the set of variable evaluations that include memory locations $\mathcal{L}$. $\| V \|$ denotes the set of Boolean conditions over $V$.
\end{definition}

For convenience, we use the notation $l \xhookrightarrow{g:\alpha} l'$ as shorthand for $(l,g,\alpha,l') \in {\hookrightarrow}$ where $l \in L$ and $\alpha \in A$, meaning that the program $\mathfrak{P}$ goes from location $l$ to $l'$ when the current variable evaluation $\eta \models g$. Therefore, we can specify $l \xhookrightarrow{g:\alpha} l'$ in CSL as $\{ g \} \alpha \{ g' \}$ where $\mathcal{E}(\alpha,\eta) \models g'$.

We call a program consisting of a set of programs as a system that is denoted as $\mathfrak{W}$.

\subsection{Temporal Extension}
We introduce temporal representation and reasoning to extend the CSL into two-dimension reasoning with temporal memory as the composition of program states. To illustrate the temporal memory, we firstly define \textit{action occurrence} in Definition~\ref{def:action_occurence}.

\begin{definition}[Action Occurrence]
\label{def:action_occurence}
An action occurrence is a partial function $A \rightharpoonup \acute{A}$, where $\acute{A}$ denotes the set of occurred actions. $a \in \acute{A}$ is a tuple $\langle I, O, A, I_p \rangle$, where $I \subseteq \mathbb{N}^{+}$ is the set of indices of occurred actions, and $O \subseteq \mathbb{N}$ denotes the set of indices of action executors that are programs executing action $a$. $I_p \subseteq \mathbb{N}$ denotes the set of indices of occurred actions happened before action $a$. If $a$ is the first occurred action, then $a.I_p = 0$.
\end{definition}

Now, we formally define the structure of the program state in ECSL.

\begin{definition}[Program State]
\label{def:program_state}
A program state $\omega \in \Omega$ is a tuple $\langle S, H, T \rangle$, where $S$ denotes the stack, $H$ denotes the heap, and $T$ denotes the temporal memory. Formally, we define $S$, $H$, and $T$ as follows:

\begin{align*}
S \triangleq V \mapsto \llbracket V \rrbracket \\
H \triangleq \mathcal{L} \rightharpoonup_\text{fin} \llbracket V \rrbracket \\
T \triangleq I \mapsto \acute{A} \\
\Omega \triangleq S \times H \times T
\end{align*}

\end{definition}

We discuss further the temporal memory by defining path and trace. Firstly, we define a relationship between actions.

\begin{definition}[Action Relation]
The action ordered relation $\triangleleft$ is a partially ordered relation, which is defined as:

$(a,a') \in \triangleleft \iff a=\textit{Pre}(a')$,

where $a,a' \in \acute{A}$ and $\textit{Pre}(a)$ denotes the predecessor action set of $a$.
\end{definition}

We use the notation $a \triangleleft a'$ as shorthand for $(a,a') \in \triangleleft$. Intuitively, $a \triangleleft a'$ means action $a$ happens before action $a'$.

\begin{definition}[Action Path]
\label{def:action_path}
A finite action path $\hat{\varrho}$ is a finite action sequence $a_1 a_2 ... a_n$ such that $\forall i \in [1,n): (a_i,a_{i+1}) \in \triangleleft$, where $n \geq 1$. An infinite action path $\varrho$ of $\mathfrak{P}$ is an infinite action sequence $a_1 a_2 a_3 ...$ such that $\forall i \in [1,+\infty):(a_i,a_{i+1}) \in \triangleleft$.
\end{definition}

\begin{definition}[Maximal and Initial]
\label{def:max_init}
An action path is maximal if and only if it is finite and terminable or infinite. An action path is initial if and only if $l_0 \xhookrightarrow{g_0:\alpha} l$, where $l_0 \in L_0$ is the initial location, $g_0$ is the initial condition, and $\alpha = a_1.A$ is the action of the first occurred action.

\end{definition}

Now, we can define temporal memory in Definition~\ref{def:temporal_memory} with Definition~\ref{def:action_path} and Definition~\ref{def:max_init}.

\begin{definition}[Temporal Memory]
\label{def:temporal_memory}
Temporal memory $t \in T$ is an initial and maximal action path.
\end{definition}

We consider a program $\mathfrak{P}$ in $\mathfrak{W}$. For $a \in t$, if $a.O$ is mapping to $\mathfrak{P}$, we call $a$ is a block in the temporal memory associated with $\mathfrak{P}$. All blocks associated with $\mathfrak{P}$ compose a subset of the temporal memory $t_n$ that is the native temporal memory of $\mathfrak{P}$. We have $t = t_f \uplus t_n$ where $t_f$ is the foreign temporal memory of $\mathfrak{P}$. To connect the semantics with the temporal properties that are verified in ECSL, we introduce the action trace defined in Definition~\ref{def:action_trace}.

\begin{definition}[Action Trace]
\label{def:action_trace}
The action trace of the finite action path $\hat{\varrho}=a_1 a_2 ... a_n$ is defined as $\mathcal{T}(\hat{\varrho})=\mathcal{P}(a_1)\mathcal{P}(a_2)...\mathcal{P}(a_n)$. The action trace $\mathcal{T}(\varrho)$ of the infinite action path $\varrho$ is defined in the same way.

$\mathcal{P}$ is a function that relates a set of propositions to occurred action $a$, which is defined as $\mathcal{P} \triangleq a \mapsto 2^\textit{Prop}$, where $\textit{Prop}$ is a set of propositions.

\end{definition}

Additionally, we use $\mathcal{T}(\mathfrak{P})$ to denote the set of all possible traces of program $\mathfrak{P}$. The set of all possible traces of the system is denoted as $\mathcal{T}(\mathfrak{W})$.

\begin{theorem}
\label{the:trace_satisfication}
Let \textit{Prop} be a set of propositions over a finite action trace $\mathcal{T}(\hat{\varrho})$ and $\Phi$ be a propositional logic formula over \textit{Prop}, then

$\infer{\mathcal{T}(\hat{\varrho}) \models \Phi}
{\forall a' \in \hat{\varrho}: (\forall a \in \hat{\varrho}: a \triangleleft a' \land \mathcal{P}(a) \models \Phi \implies \mathcal{P}(a') \models \Phi)}$
\end{theorem}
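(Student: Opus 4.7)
The plan is to prove the statement by induction on the length $n$ of the finite action path $\hat{\varrho} = a_1 a_2 \ldots a_n$, reducing $\mathcal{T}(\hat{\varrho}) \models \Phi$ to the per-action condition $\mathcal{P}(a_i) \models \Phi$ for every index $i \in [1,n]$. This reduction is justified by reading trace satisfaction pointwise: since $\mathcal{T}(\hat{\varrho}) = \mathcal{P}(a_1)\mathcal{P}(a_2)\ldots\mathcal{P}(a_n)$ by Definition~\ref{def:action_trace}, a propositional formula $\Phi$ over $\textit{Prop}$ is satisfied by the whole trace exactly when it is satisfied by each proposition set $\mathcal{P}(a_i)$.

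For the inductive step, suppose $\mathcal{P}(a_k) \models \Phi$ has been established for all $k < i$ with $i \geq 2$. By Definition~\ref{def:action_path}, consecutive occurred actions on the path are related by $\triangleleft$, so in particular $a_{i-1} \triangleleft a_i$. Instantiating the hypothesis of the theorem with $a = a_{i-1}$ and $a' = a_i$, the conjunction $a_{i-1} \triangleleft a_i \land \mathcal{P}(a_{i-1}) \models \Phi$ holds, which by the premise yields $\mathcal{P}(a_i) \models \Phi$. Iterating this up to $i = n$ gives $\mathcal{P}(a_i) \models \Phi$ for every action along $\hat{\varrho}$, and the desired $\mathcal{T}(\hat{\varrho}) \models \Phi$ then follows from the pointwise reading above.

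The main obstacle is the base case $i = 1$. Because $a_1$ is the first occurred action, by Definition~\ref{def:action_occurence} we have $a_1.I_p = 0$, so no $a \in \hat{\varrho}$ satisfies $a \triangleleft a_1$; the left-hand side of the implication in the premise is therefore vacuously false at $a' = a_1$ and does not by itself force $\mathcal{P}(a_1) \models \Phi$. To close the argument rigorously, I would either read the theorem as implicitly assuming $\mathcal{P}(a_1) \models \Phi$ as an initial condition — so that the stated premise is used as a one-step propagation rule anchored at $a_1$ — or strengthen the hypothesis to explicitly include satisfaction at the initial action of $\hat{\varrho}$, which is consistent with the intended use of initiality in Definition~\ref{def:max_init}. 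With this anchor in place, the inductive propagation of the previous paragraph carries $\Phi$-satisfaction from $a_1$ through $a_n$ and yields the conclusion.
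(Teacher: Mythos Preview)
Your approach is essentially the same as the paper's: both reduce $\mathcal{T}(\hat{\varrho})\models\Phi$ to the pointwise condition $\mathcal{P}(a_i)\models\Phi$ for every $i$, and both propagate this along the path using the premise as a step rule. The paper phrases the propagation in terms of prefix fragments $\hat\varrho' = a_1\ldots a_i$ rather than an explicit induction on $i$, but the content is identical.

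The base-case issue you flag is real, and the paper's proof does not address it either: it simply asserts that running over all fragments yields $\forall a\in\hat\varrho:\mathcal{P}(a)\models\Phi$, without explaining how $\mathcal{P}(a_1)\models\Phi$ is obtained when $a_1$ has no $\triangleleft$-predecessor in $\hat\varrho$. Your diagnosis and your proposed repairs (an implicit anchor at $a_1$, or an explicit strengthening of the hypothesis) are therefore improvements in rigor over the original, not deviations from it.
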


\begin{proof}
Let us consider a finite action trace $\mathcal{T}(\hat{\varrho})$ of the finite action path $\hat{\varrho}=a_1a_2...a_n$. We take an action $a_i \in \hat{\varrho}$ to construct a fragment $\hat{\varrho}'=a_1...a_i$ of the path such that $\forall a \in \hat{\varrho}': a \triangleleft a' \land \mathcal{P}(a) \models \Phi$. If it implies that $\mathcal{P}(a_i) \models \Phi$, we have $\forall a \in \hat{\varrho}': \mathcal{P}(a) \models \Phi \iff \mathcal{T}(\hat{\varrho}') \models \Phi$.

Therefore, we can take all $a' \in \hat{\varrho}$ to construct all possible fragments of the action path. If $\forall \hat{\varrho}' \subseteq \hat{\varrho}:(\forall a \in \hat{\varrho}': a \triangleleft a' \land \mathcal{P}(a) \models \Phi \implies \mathcal{P}(a') \models \Phi)$, then $\forall a \in \hat{\varrho}: P(a) \models \Phi \iff \mathcal{T}(\hat{\varrho}) \models \Phi$.
\end{proof}

Intuitively, Theorem \ref{the:trace_satisfication} shows that for any occurred action $a'$ in $\mathcal{T}$, if all actions that happen before $a'$ satisfy $\Phi$, which implies that $a'$ also satisfies $\Phi$, then we have trace $\mathcal{T}$ satisfies $\Phi$.

\subsection{Communication Extension}
In ECSL, we consider the formalization of the communication as the elementary component to facilitate the high-level formalization of a complex system where communications among programs are indispensable.

Channel is the abstraction of the transmission medium to convey information signals. Communication is the action of passing and receiving messages through the channel. In ECSL, we reason about the channels and communications as the basis.

\begin{definition}[Channel]
A channel $c \in C$ is a buffer with a capacity $\textit{Cap}(c) \in \mathbb{N} \cup \{ \infty \}$, a domain $\textit{Dom}(c)$.
\end{definition}

Let $c!m$ denote sending signal $m$ via channel $c$ and $c?v$ denote receiving a signal from channel $c$ and assign the signal to variable $v$.

\begin{definition}[Communication]
A communication $\pi \in \Pi$ is an action where $\Pi = \{ c!m, c?v \}$, $c \in C, m \in \textit{Dom}(c), v \in V \text{ with } \textit{Dom}(v) \supseteq \textit{Dom}(c)$.
\end{definition}

\begin{definition}[Complete]
A communication $c!m$ is complete if and only if there exists a communication $c?v$ satisfying that $(c!m, c?v) \in \triangleleft$. A communication $c?v$ is complete if and only if there exists a communication $c!m$ satisfying that $(c!m, c?v) \in \triangleleft$.
\end{definition}

We can obtain that it is reasonable to consider channel $c$ as a buffer. In this manner, a communication $c!m$ produces signal $m$ into the buffer whereas a communication $c?v$ consumes a signal from the buffer while assigning it to variable $v$.

With the definition of channels and communications, we extend our program model into the channel program model.

\begin{definition}[Channel Program]
A channel program over $(V, C)$ is defined in the same manner as Definition~\ref{def:program}:

$\mathfrak{C} \triangleq (L, A, \mathcal{E}, \hookrightarrow, L_0, g_0)$.

The only difference is that $\hookrightarrow \subseteq L \times \| V \| \times (A \cup \Pi) \times L$ where $V$ is a set of typed variables and $C$ is a set of channels.
\end{definition}

Hence, conditional transitions are extended with the communication action set $\Pi$, which yields conditional transitions $l \xhookrightarrow{g:c!m} l'$ and $l \xhookrightarrow{g:c?v} l'$ respectively.

\begin{theorem}
\label{thm:comm_spec}
Let $c \in C$ be a finite asynchronous channel. If a complete communication $c!m$ and $c?v$ is correct, then the specifications below are satisfied.

\begin{enumerate}
    \item $\{ m \mapsto - \}~c!m~\{ \top \}$
    \item $\{ v \mapsto - \}~c?v~\{ \llbracket v \rrbracket = \llbracket m \rrbracket \}$
\end{enumerate}
\end{theorem}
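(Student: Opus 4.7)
The plan is to derive each Hoare triple directly from the operational meaning of the channel actions as given by the effect function $\mathcal{E}$ in the channel program model, combined with the hypothesis that the complete pair $(c!m, c?v)$ is correct. I would first unpack what ``complete and correct'' requires: by Definition of Complete, $(c!m, c?v) \in \triangleleft$, so the send strictly precedes the receive in the action order; and correctness means the signal placed into the finite asynchronous buffer $c$ by $c!m$ is exactly the signal later removed and assigned to $v$ by $c?v$, with no interference (since $c$ is a buffer of capacity at most the finite bound, which is respected because the send succeeded).

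For clause (1), I would start from an arbitrary state $(s,h,t)$ with $(s,h) \models m \mapsto -$, which by Definition~\ref{def:sem_CSL} forces $\textit{Dom}(h) \supseteq \{\llbracket m \rrbracket_s\}$ and $h$ to be defined at that location. Because $c$ is asynchronous, $c!m$ does not block: its effect is to produce the value $h(\llbracket m \rrbracket_s)$ into the buffer $c$, leaving $h$ and $s$ unchanged and advancing $t$ by appending an occurrence of $c!m$. The post-state trivially satisfies $\top$, so $\{m \mapsto -\}\, c!m \,\{\top\}$ holds.

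For clause (2), I would fix a pre-state $(s,h,t)$ with $(s,h) \models v \mapsto -$, so $v$ is allocated in $h$. Since the pair is complete, the matching $c!m$ has already executed (by $\triangleleft$) and deposited its value in $c$; since it is correct, the head of the buffer that $c?v$ consumes is exactly the value that $c!m$ sent, namely $\llbracket m \rrbracket$ evaluated at the sender's state. The effect of $c?v$ is to remove that value from $c$ and update the stack so that $s'(v) = \llbracket m \rrbracket$. Thus in the post-state we have $\llbracket v \rrbracket_{s'} = \llbracket m \rrbracket$, establishing the second triple.

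The main obstacle is that the paper does not give an explicit operational rule for $\mathcal{E}$ on $c!m$ and $c?v$; the proof is therefore really a specification of what ``correct'' must mean for the logic to be sound, and I would spell this out as a lemma that ties the abstract buffer semantics to the channel rule. A secondary subtlety is justifying that $v \mapsto -$ alone suffices as the receiver's precondition --- this relies on the asynchronous semantics guaranteeing the send's payload persists in the buffer between the two occurrences, which follows from $\triangleleft$ ordering together with the finiteness assumption ensuring no overflow has discarded $m$.
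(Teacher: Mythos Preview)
Your approach is semantic: you fix a state satisfying the precondition, appeal to the effect of the channel action on that state, and read off the postcondition. The paper does something quite different and entirely syntactic. It \emph{defines} the two channel actions as conditional critical regions over a queue,
\[
c!m \triangleq \text{with } c \text{ when } \neg\textit{full} \text{ do } \textit{Enqueue}(c,m),
\qquad
c?v \triangleq \text{with } c \text{ when } \neg\textit{empty} \text{ do } v \gets \textit{Dequeue}(c),
\]
introduces a resource invariant $\textit{RI}_c \triangleq \neg\textit{full} \lor (\neg\textit{empty} \land \textit{Peek}(c) = \llbracket m \rrbracket)$, instantiates the \textit{Critical Region Rule} to obtain dedicated send/receive proof rules, and then discharges both triples by short CSL proof outlines that push $\textit{RI}_c$ through the \textit{Enqueue}/\textit{Dequeue} calls. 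No appeal to $\mathcal{E}$ or to an operational transition relation is made at all.

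The gap you yourself flag is real and is precisely what the paper's route avoids. Because no operational rule for $\mathcal{E}$ on $c!m$ and $c?v$ is ever given, your argument cannot be completed as a proof inside the logic; at best it is a plausibility argument about what the semantics ought to be. The paper sidesteps this by reducing the new constructs to existing CSL primitives (critical regions plus a resource invariant), so that the derivation lives entirely within the already-available proof rules. What your approach would buy, were the semantics supplied, is a direct soundness-style justification; what the paper's approach buys is that the theorem becomes a derivation in CSL rather than a meta-level argument, and it makes explicit the invariant that ties the sent value to the received one. If you want to repair your proof to match the paper, the missing ingredient is the critical-region encoding together with the choice of $\textit{RI}_c$; once those are in place, both triples follow from the \textit{Critical Region Rule} without touching the operational semantics.
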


\begin{proof}
For the finite asynchronous channel $c$, we consider that $c$ is a finite queue with $\textit{Cap}(c) \in \mathbb{N}^{+}$. Recall the basic data type operation of a queue such as $\textit{Enqueue}$ (appending the element at the rear of the queue), $\textit{Dequeue}$ (returning and removing the element at the head of the queue), $\textit{Peek}$ (returning the element at the head of the queue without removing the element from the queue).

We specify $c!m$ and $c?v$ respectively as follows.

$c!m \triangleq \text{with } c \text{ when } \neg \textit{full}$ do $\textit{Enqueue}(c, m)$

$c?v \triangleq \text{with } c \text{ when } \neg \textit{empty}$ do $v \gets \textit{Dequeue}(c)$

where $\textit{full} \triangleq \textit{Len}(c) = \textit{Cap}(c)$, $\textit{empty} \triangleq \textit{Len}(c) = 0$.

We introduce the resource invariant $RI_{c}$ as follows:

$\textit{RI}_c \triangleq \neg \textit{full} \lor (\neg \textit{empty} \land \textit{Peek}(c) = \llbracket m \rrbracket)$

We have channel proof rules derived from \textit{Critical Region Rule}.

\begin{align*}
\infer{\{ P \} ~c!m~ \{ P' \} }
{\{ (P * \textit{RI}_c) \land \neg \textit{full} \}~\textit{Enqueue}(c,m)~\{ P' * \textit{RI}_c \} } \\
\infer{\{ P \} ~c?v~ \{ P' \} }
{\{ (P * \textit{RI}_c) \land \neg \textit{empty} \}~v \gets \textit{Dequeue}(c,m)~\{ P' * \textit{RI}_c \} }
\label{eq:cpr}
\end{align*}

Now, we prove the first specification $\{ m \mapsto - \}~c!m~\{ \top \}$.

\begin{align*}
    \{ (\textit{RI}_c * m \mapsto -) \land \neg \textit{full} \} \\
    \{ \neg \textit{full} * m \mapsto - \} \\
\textit{Enqueue}(c, \llbracket m \rrbracket) \\
    \{ \neg \textit{empty} \land \textit{Peek}(c) = \llbracket m \rrbracket \} \\
    \{ \textit{RI}_c \} \\
    \{ \textit{RI}_c \land \top \}
\end{align*}

The proof of $\{ v \mapsto - \}~c?v~\{ \llbracket v \rrbracket = \llbracket m \rrbracket \}$ is given below.

\begin{align*}
    \{ (\textit{RI}_c * v \mapsto -) \land \neg \textit{empty} \} \\
    \{ (\neg \textit{empty} \land \textit{Peek}(c) = \llbracket m \rrbracket) * v \mapsto - \} \\
    v \gets \textit{Dequeue}(c) \\
    \{ \llbracket v \rrbracket = \llbracket m \rrbracket \land \neg \textit{full} \} \\
    \{ \neg \textit{full} * \llbracket v \rrbracket = \llbracket m \rrbracket \} \\
    \{ \textit{RI}_c * \llbracket v \rrbracket = \llbracket m \rrbracket \}
\end{align*}

This completes our proof of the specifications given the complete communication is correct.
\end{proof}

\begin{corollary}
\label{cor:com_spec}
Let $c \in C$ be a finite asynchronous channel. If a complete communication $c!m$ and $c?v$ is correct, then $\{ m \mapsto - \}~c!m \parallel c?v~\{ \llbracket v \rrbracket = \llbracket m\rrbracket \}$ is satisfied.
\end{corollary}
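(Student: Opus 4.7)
The plan is to derive this corollary as an immediate instance of the Parallel Composition Rule of CSL, using the two Hoare triples that Theorem~\ref{thm:comm_spec} has already established. Both premises required by the rule are available because the hypothesis that $c!m$ and $c?v$ form a complete and correct communication is precisely the hypothesis invoked in Theorem~\ref{thm:comm_spec}.

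First, I would instantiate the Parallel Composition Rule with $n=1$, taking $P_0 \triangleq m \mapsto -$, $\alpha_0 \triangleq c!m$, $P_0' \triangleq \top$ and $P_1 \triangleq v \mapsto -$, $\alpha_1 \triangleq c?v$, $P_1' \triangleq \llbracket v \rrbracket = \llbracket m \rrbracket$. This step yields the triple
\[
\{ (m \mapsto -) * (v \mapsto -) \}\; c!m \parallel c?v \; \{\top * (\llbracket v \rrbracket = \llbracket m \rrbracket)\}.
\]
I would then tidy the assertions using the rule of consequence. Because $\llbracket v \rrbracket = \llbracket m \rrbracket$ is a pure assertion, the postcondition $\top * (\llbracket v \rrbracket = \llbracket m \rrbracket)$ collapses to $\llbracket v \rrbracket = \llbracket m \rrbracket$, matching the statement. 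The resource invariant $\textit{RI}_c$ introduced in the proof of Theorem~\ref{thm:comm_spec} is already shared between the two threads through the $\textbf{with } c$ critical region, so no additional framing on $\textit{RI}_c$ is needed at the parallel composition level.

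The only step that is not mechanical, and which I expect to be the main obstacle, is the precondition bookkeeping. The Parallel Composition Rule enforces disjoint concurrency, so the sender's and the receiver's footprints must be $*$-separated, and the receiver genuinely requires the cell $v \mapsto -$ in its own footprint. I would handle this either by reading the stated $m \mapsto -$ as abbreviating the full footprint $m \mapsto - * v \mapsto -$, or by absorbing $v \mapsto -$ into the ambient heap via the frame rule so that the visible precondition reduces to $m \mapsto -$. Once this bookkeeping is settled, the corollary follows directly from Theorem~\ref{thm:comm_spec} and the Parallel Composition Rule without further calculation.
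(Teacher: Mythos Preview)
Your proposal is correct and follows essentially the same route as the paper: invoke the two triples of Theorem~\ref{thm:comm_spec}, combine them with the Parallel Composition Rule, and collapse the postcondition $\top * (\llbracket v \rrbracket = \llbracket m \rrbracket)$ to $\llbracket v \rrbracket = \llbracket m \rrbracket$ by consequence. The only cosmetic difference is that the paper splits the precondition as $m \mapsto - * \top$ and gives the receiver the trivial precondition $\top$ rather than $v \mapsto -$, so your careful discussion of how to absorb $v \mapsto -$ is in fact more scrupulous than the paper's own outline.
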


\begin{proof}
With \textit{Parallel Composition Rule}, we only need to prove separate programs locally in the same way of proof of Theorem~\ref{thm:comm_spec} and combine these local proofs with \textit{Critical Region Rule} and \textit{Parallel Composition Rule}. The proof outline is given in Table~\ref{tab:com_proof_spec2}.

\begin{table}
  \caption{Proof outline of $\{ m \mapsto - \}~c!m \parallel c?v~\{ \llbracket v \rrbracket = \llbracket m \rrbracket \}$.}
  \label{tab:com_proof_spec2}
  \centering
  \begin{tabular}{ccc}
  \multicolumn{3}{c}{$\{ m \mapsto - \}$} \\
  \multicolumn{3}{c}{$\{ m \mapsto - * \top \}$} \\
  $\{ m \mapsto - \}$ &  & $\{ \top \}$ \\
  $c!m$ & $\parallel$ & $c?v$ \\
  $\{ \top \}$ & & $\{ \llbracket v \rrbracket = \llbracket m \rrbracket \}$ \\
  \multicolumn{3}{c}{$\{ \top * \llbracket v \rrbracket = \llbracket m \rrbracket \}$} \\
  \multicolumn{3}{c}{$\{ \llbracket v \rrbracket = \llbracket m \rrbracket \}$}
  \end{tabular}
\end{table}
\end{proof}

\subsection{Environment Extension}
We extend the CSL with the representation of the environment to reason about the environment factors of a program, especially in a parallel system. The environment factor can be factorized into foreign factor and native factor.

\begin{definition}[Judgement Form]
\label{def:judgement_form}
We define the form of a judgement of ECSL as:

$J \vdash \{ \Gamma, \gamma \land P \}~\alpha~\{ \Gamma, \gamma' \land P' \}$,

where $J$ denotes the judgement \cite{reddy_syntactic_2012}. $\Gamma$ specifies the foreign conditions while $\gamma$ and $\gamma'$ specify the native pre-conditions and post-conditions. $P$ and $P'$ are assertions. $\alpha \in A$ is the action to change the state of programs. The syntax of $\Gamma$ and $\gamma$ is defined in a manner of temporal logic.
\end{definition}

Intuitively, the foreign environment is a set of conditions from other processes while the native environment is a set of conditions from the local process.

There is a special kind of assertion called \textit{pure} assertion.

\begin{definition}[Pure Assertion]
\label{def:pure}
In program $\mathfrak{P}$, an assertion $P$ is pure if and only if

$(s, h, t) \models P \implies \forall t' \in T: (s, h, t') \models P$.
\end{definition}

In intuitive terms, assertion $P$ is pure if and only if the validity of $P$ is independent with the environment factors. If both pre-condition and post-condition only contain pure assertions, the specification is reduced to the CSL.

We introduce new rules for environment extension. For brevity, we use $\Upsilon$ to denote the conjunction of $\gamma$ and $P$. The big star notation $\circledast_i^n$ denotes consecutive separating conjunction from index $i$ to $n$.

\begin{align*}
\infer{\{ \Gamma, \Upsilon \} \text{ when } \Gamma \text{ do } \alpha~\{ \Gamma, \Upsilon' \} }
{\{ \Gamma, \Upsilon \}~\alpha~\{ \Gamma, \Upsilon' \} } \\
\text{(Foreign Environment Rule)}
\end{align*}
\begin{align*}
\infer{\{ \circledast_{i=0}^n \Upsilon_i \}~\alpha_0 \parallel ... \parallel \alpha_n~\{ \circledast_{i=0}^n \Upsilon_i' \}}
{\{ \Gamma_0, \Upsilon_0 \}~\alpha_0~\{ \Gamma_0, \Upsilon_0' \}~...~\{ \Gamma_n, \Upsilon_n \}~\alpha_n~\{ \Gamma_n, \Upsilon_n' \}} \\
\text{(Environment Composition Rule)}
\end{align*}

In \textit{Environment Composition Rule}, the inference eliminates the foreign environment naturally if we regard a parallel system as the highest level of specification, which means that all separated programs must mutually satisfy the foreign environments of the other.

\begin{example}
\label{eg:env}
We consider a system $\mathfrak{W}$ containing two parallel channel programs $\mathfrak{C}_0$ and $\mathfrak{C}_1$ specified in Table~\ref{tab:pcs_spec} where the left channel program is $\mathfrak{C}_0$ and the right one is $\mathfrak{C}_1$. The specification is well formulated with the support of the communication encapsulation and Theorem~\ref{thm:comm_spec}.

\begin{table*}
  \caption{Specification of $\mathfrak{W}$ in Example~\ref{eg:env}.}
  \label{tab:pcs_spec}
  \centering
  \begin{tabular}{ccc}
    & $\{ \top, m_0, m_1 \mapsto -, - * v_0, v_1 \mapsto -, - \}$ & \\
    $c!m_0$ & $\parallel$ & $c?v_0$ \\
    $c!m_1$ & & $c?v_1$ \\
   \multicolumn{3}{c}{$\{ \top, (c!m_0 \triangleleft c!m_1 \land \top) * (c?v_0 \triangleleft c?v_1 \land \llbracket v_0 \rrbracket = \llbracket m_0 \rrbracket \land \llbracket v_1 \rrbracket = \llbracket m_1 \rrbracket) \}$}
  \end{tabular}
\end{table*}

The sending channel program $\mathfrak{C}_0$ sends signals $m_0$, $m_1$. The receiving channel program $\mathfrak{C}_1$ receives signal $m_0$ and assigns it to local variable $v_0$ after $\mathfrak{C}_0$ completes sending action $c!m_0$ and receives signal $m_1$ in the same way.

We can verify $\mathfrak{W}$ by verifying $\mathfrak{C}_0$ and $\mathfrak{C}_1$ separately according to \textit{Environment Composition Rule}. For each channel program, we formalize it from both foreign perspective and native perspective with the support of environment factors.

For $\mathfrak{C}_0$ that is sending signals through channel $c$, we can specify and verify it as follows with omission of detailed pointer management:

\begin{align*}
    \{ \top, \top \land m_0, m_1 \mapsto -, - \} \\
    c!m_0 \\
    \{ \top, c!m_0 \land m_1 \mapsto - \} \\
    c!m_1 \\
    \{ \top, c!m_0 \triangleleft c!m_1 \land \top \}
\end{align*}

It is noteworthy that with Definition~\ref{def:pure}, we can reduce the formalization of $\mathfrak{C}_0$ to a CSL form by omitting the environment factor because the assertion is pure.

We use the action path as shorthand to notate the atomic proposition in a temporal logic. For $\mathfrak{C}_1$ that is receiving signals from channel $c$ and assigning signals to different variables, we can formalize it as follows:

\begin{align*}
    \{ \top, v_0, v_1 \mapsto -, - \} \\
    \{ c!m_0, v_0, v_1 \mapsto -, - \} \\
    c?v_0 \\
    \{ c!m_0, c?v_0 \land \llbracket v_0 \rrbracket = \llbracket m_0 \rrbracket \} \\
    \{ c!m_0 \triangleleft c!m_1, c?v_0 \land \llbracket v_0 \rrbracket = \llbracket m_0 \rrbracket \} \\
    c?v_1 \\
    \{ c!m_0 \triangleleft c!m_1, c?v_0 \triangleleft c?v_1 \land \llbracket v_0 \rrbracket = \llbracket m_0 \rrbracket \land \llbracket v_1 \rrbracket = \llbracket m_1 \rrbracket \}
\end{align*}

Now, we give the outline of proof in Table~\ref{tab:env_proof} with \textit{Environment Composition Rule}.

\begin{table*}
  \caption{Proof outline of the specification of $\mathfrak{W}$ in Example~\ref{eg:env}.}
  \label{tab:env_proof}
  \centering
  \begin{tabular}{lcl}
    & $\{ \top, m_0, m_1 \mapsto -, - * v_0, v_1 \mapsto -, - \}$ & \\
    $\{ \top, \top \land m_0, m_1 \mapsto -, - \}$ & & $\{ c!m_0, v_0, v_1 \mapsto -, - \}$ \\
    $c!m_0$ & $\parallel$ & $c?v_0$ \\
    $\{ \top, c!m_0 \land m_1 \mapsto - \}$ & & $\{ c!m_0, c?v_0 \land \llbracket v_0 \rrbracket = \llbracket m_0 \rrbracket \}$ \\
    $\{ \top, c!m_0 \land m_1 \mapsto - \}$ & & $\{ c!m_0 \triangleleft c!m_1, c?v_0 \land \llbracket v_0 \rrbracket = \llbracket m_0 \rrbracket \}$ \\
    $c!m_1$ & $\parallel$ & $c?v_1$ \\
    $\{ \top, c!m_0 \triangleleft c!m_1 \land \top \}$ & & $\{ c!m_0 \triangleleft c!m_1, c?v_0 \triangleleft c?v_1 \land \llbracket v_0 \rrbracket = \llbracket m_0 \rrbracket \land \llbracket v_1 \rrbracket = \llbracket m_1 \rrbracket \}$ \\
    \multicolumn{3}{c}{$\{ \top, (c!m_0 \triangleleft c!m_1 \land \top) * (c?v_0 \triangleleft c?v_1 \land \llbracket v_0 \rrbracket = \llbracket m_0 \rrbracket \land \llbracket v_1 \rrbracket = \llbracket m_1 \rrbracket) \}$}
  \end{tabular}
\end{table*}

\end{example}

From Example~\ref{eg:env}, we can obtain that each program in a system can be proved locally and then combined together with \textit{Environment Composition Rule} as long as their native environments could mutually satisfy the foreign environments. In other words, the system can be proved correct if and all foreign environments of programs in the system can be eliminated.

By extending the environment representation, we equip the CSL with the capability to reason about the assertions together with environment factors including foreign environment and native environment, which enhances the modularity. Furthermore, we can formulate temporal properties in systems explicitly.

\subsection{Nest Extension}
To formalize complex systems with a nested architecture, we introduce the nest extension to enhance the capability of formalizing these systems from a low abstraction level to a high abstraction level.

Let $\mathfrak{N}$ denote a set of systems at the same level. We have $N_0,.., N_n \in \mathfrak{N}$. $@$ is the ownership relation between action and system. $(\alpha, N) \in @$ denotes action $\alpha$ happens at system $N$, meaning that system $N$ has the ownership of action $\alpha$. We use the notation $\alpha @ N$ as shorthand for $(\alpha,N) \in @$. $\parallel_i$ is the notation for nest parallel where $i$ denotes the level ID to distinguish with program parallel $\parallel$.

We introduce new rules for nest extension.

\begin{align*}
\infer{\{ \hat{\Gamma}, \circledast_{i=0}^n \Upsilon_i \}~\alpha_0@N \parallel ... \parallel \alpha_n@N~\{ \hat{\Gamma}, \circledast_{i=0}^n \Upsilon_i' \}}
{\{ \Gamma_0, \Upsilon_0 \}~\alpha_0@N~\{ \Gamma_0, \Upsilon_0' \}~...~\{ \Gamma_n, \Upsilon_n \}~\alpha_n@N~\{ \Gamma_n, \Upsilon_n' \}} \\
\text{(Nest Environment Composition Rule)}
\end{align*}

\begin{align*}
\infer{ \{ \circledast_{i=0}^n \Upsilon_i \}~\alpha_0@N_0 \parallel_i ... \parallel_i \alpha_n@N_n~\{ \circledast_{i=0}^n \Upsilon_i' \} }
{ \{ \Gamma_0, \Upsilon_0 \}~\alpha_0@N_0~\{ \Gamma_0, \Upsilon_0' \}...\{ \Gamma_n, \Upsilon_n \}~\alpha_n@N_n~\{ \Gamma_n, \Upsilon_n' \}} \\
\text{(Nest Composition Rule)}
\end{align*}

The most distinguishing feature of \textit{Nest Environment Composition Rule} is the immutability of the foreign environment. For a subsystem on $N$, the foreign environment includes the temporal properties of other subsystems on $N$ and the temporal properties of other systems at the same level. While specifying a parallel subsystem on system $N$, the temporal properties of other systems can be used as environment factors. For instance, to do action $c!m_1$, the sending program on $N$ needs to satisfy that another program on $N$ has received a signal $m_0$ from $N'$ through channel $c$ and assigned $m_0$ to variable $v$, which can be specified as $\{ c!m_0@N' \triangleleft c?v@N, m_1 \mapsto - \}~c!m_1~\{ \top, c!m_1 \}$. In this manner, we distinguish communications at different levels.

Furthermore, our \textit{Nest Composition Rule} makes it possible to construct a nested system having multiple abstraction levels.

\begin{example}
\label{eg:nest}
We consider a small network with two systems $N_0$ and $N_1$ communicating with each other. Each system has two parallel channel programs with one for sending signals and another one for receiving signals.

Assume a simple communication protocol that $N_0$ sends a message $m_0$ to $N_1$ via channel $c_0$. When $N_1$ receives message $m_0$ from $N_0$, $N_1$ sends message $n_0$ to $N_0$ through channel $c_1$. When $N_0$ receives message $n_0$ from channel $c_1$, $N_0$ sends message $m_1$ back to $N_1$ through channel $x_1$. When $N_0$ confirms that it has received $n_1$ and finished the assignment, the sending channel program of $N_0$ will terminate in normal.

We annotate interactions in the network in Table~\ref{tab:nest_annotate}.

\begin{table*}
\caption{Annotation of the network in Example~\ref{eg:nest}.}
\label{tab:nest_annotate}
\centering
\begin{tabular}{ccccccc}
    $c_0!m_0 @ N_0$ & & $c_1?x_0 @ N_0$ & & $c_1!n_0 @ N_1$ & & $c_0?y_0 @ N_1$ \\
    & $\parallel$ & & $\parallel_0$ & & $\parallel$ & \\
    $c_0!m_1 @ N_0$ & & $c_1?x_1 @ N_0$ & & $c_1!n_1 @ N_1$ & & $c_0?y_1 @ N_1$
  \end{tabular}
\end{table*}

With the support of the nest parallel extension, we can specify systems separately and combine local reasoning to complete the specification and proof. We specify system $N_0$ as follows:
\begin{align*}
\{ \top, m_0, m_1 \mapsto -, - * x_0, x_1 \mapsto -, - \} \\
c_0!m_0@N_0;c_0!m_1@N_0 \parallel c_1?x_0@N_0;c_1?x_1@N_0 \\
\{ c_1!n_0@N_1 \triangleleft c_1!n_1@N_1, (c_0!m_0 \triangleleft c_0!m_1) * \\
(c_1?x_0 \triangleleft c_1?x_1 \land \llbracket x_0 \rrbracket = \llbracket n_0 \rrbracket \land \llbracket x_1 \rrbracket = \llbracket n_1 \rrbracket) \}
\end{align*}

We give the proof outline for the specification of $N_0$ in Table~\ref{tab:n_0_proof}. For brevity, we omit the pointer management and merge some pre-conditions and post-conditions that are trivial to be proved with the CSL inference rules.

\begin{table*}
\caption{Proof outline of the specification of $N_0$ in Example~\ref{eg:nest}.}
\label{tab:n_0_proof}
\centering
  \begin{tabular}{ccc}
  & $\{ \top, m_0, m_1 \mapsto -, - * $ & \\
  & $x_0, x_1 \mapsto -, - \}$ & \\
  $\{ \top, m_0, m_1 \mapsto -, - \}$ & & $\{ c_0!m_0@N_0 \triangleleft c_1!n_0@N_1, x_0, x_1 \mapsto -, - \}$ \\
  $c_0!m_0@N_0$ & $\parallel$ & $c_1?x_0@N_0$ \\
  $\{ c_1?x_0@N_0, c_0!m_0@N_0 \land m_1 \mapsto - \}$ & & $\{ c_0!m_1@N_0 \triangleleft c_1!n_1@N_1, \llbracket x_0 \rrbracket = \llbracket n_0 \rrbracket \land x_1 \mapsto - \}$ \\
  $c_0!m_1@N_0$ & $\parallel$ & $c_1?x_1@N_0$ \\
  $\{ c_1?x_1@N_0, c_0!m_0@N_0 \triangleleft c_0!m_1@N_0 \}$ & & $\{ c_1!n_0@N_1 \triangleleft c_1!n_1@N_1, c_1?x_0@N_1 \triangleleft c_1?x_1@N_1 \land $ \\
  & & $\llbracket x_0 \rrbracket = \llbracket n_0 \rrbracket \land \llbracket x_1 \rrbracket = \llbracket n_1 \rrbracket \}$ \\
  \multicolumn{3}{c}{$\{ c_1!n_0@N_1 \triangleleft c_1!n_1@N_1, (c_0!m_0@N_0 \triangleleft c_0!m_1@N_0) * (c_1?x_0@N_0 \triangleleft c_1?x_1@N_0 \land \llbracket x_0 \rrbracket = \llbracket n_0 \rrbracket \land \llbracket x_1 \rrbracket = \llbracket n_1 \rrbracket) \}$}
  \end{tabular}
\end{table*}

We also give the specification of system $N_1$ as follows:
\begin{align*}
\{ \top, y_0, y_1 \mapsto -,- * n_0, n_1 \mapsto -, - \} \\
c_0?y_0@N_1;c_0?y_1@N_1 \parallel c_1!n_0@N_1;c_1!n_1@N_1 \\
\{ c_0!m_0@N_0 \triangleleft c_0!m_1@N_0, (c_1!n_0 \triangleleft c_1!n_1) * \\
(c_0?y_0 \triangleleft c_0?y_1 \land \llbracket y_0 \rrbracket = \llbracket m_0 \rrbracket \land \llbracket y_1 \rrbracket = \llbracket m_1 \rrbracket) \} \}
\end{align*}

The proof outline is similar to system $N_0$ in Table~\ref{tab:n_1_proof}.
\begin{table*}
\caption{Proof outline of the specification of $N_1$ in Example~\ref{eg:nest}.}
\label{tab:n_1_proof}
\centering
  \begin{tabular}{ccc}
  & $\{ \top, y_0, y_1 \mapsto -,- * $ & \\
  & $n_0, n_1 \mapsto -, - \}$ & \\
  $\{ c_0!m_0@N_0, y_0, y_1 \mapsto -, - \}$ & & $\{ c_0?y_0@N_1, n_0, n_1 \mapsto -, - \}$ \\
  $c_0?y_0@N_1$ & $\parallel$ & $c_1!n_0@N_1$ \\
  $\{ c_1!n_0@N_1 \triangleleft c_0!m_1@N_0, \llbracket y_0 \rrbracket = \llbracket m_0 \rrbracket \land y_1 \mapsto - \}$ & & $\{ c_0?y_1@N_1, c_1!n_0@N_1 \land n_1 \mapsto - \}$ \\
  $c_0?y_1@N_1$ & $\parallel$ & $c_1!n_1@N_1$ \\
  $\{ c_0!m_0@N_0 \triangleleft c_0!m_1@N_0, c_0?y_0 \triangleleft c_0?y_1 \land $ & & $\{ \top, c_1!n_0@N_1 \triangleleft c_1!n_1@N_1 \}$ \\
  $\llbracket y_0 \rrbracket = \llbracket m_0 \rrbracket \land \llbracket y_1 \rrbracket = \llbracket m_1 \rrbracket \}$ & & \\
  \multicolumn{3}{c}{$\{ c_0!m_0@N_0 \triangleleft c_0!m_1@N_0, (c_1!n_0@N_1 \triangleleft c_1!n_1@N_1) * (c_0?y_0@N_1 \triangleleft c_0?y_1@N_1 \land \llbracket y_0 \rrbracket = \llbracket m_0 \rrbracket \land \llbracket y_1 \rrbracket = \llbracket m_1 \rrbracket) \}$}
  \end{tabular}
\end{table*}

Let $P(N_i)'$ denote the post-condition of system $N_i$. In this network, we have $P(N_0)'$ and $P(N_1)'$ as follows:

$P(N_0)' = \{ c_1!n_0@N_1 \triangleleft c_1!n_1@N_1, (c_0!m_0@N_0 \triangleleft c_0!m_1@N_0) * (c_1?x_0@N_0 \triangleleft c_1?x_1@N_0 \land \llbracket x_0 \rrbracket = \llbracket n_0 \rrbracket \land \llbracket x_1 \rrbracket = \llbracket n_1 \rrbracket) \}$

$P(N_1)' = \{ c_0!m_0@N_0 \triangleleft c_0!m_1@N_0, (c_1!n_0@N_1 \triangleleft c_1!n_1@N_1) * (c_0?y_0@N_1 \triangleleft c_0?y_1@N_1 \land \llbracket y_0 \rrbracket = \llbracket m_0 \rrbracket \land \llbracket y_1 \rrbracket = \llbracket m_1 \rrbracket) \}$.

Now, we specify the network in Table~\ref{tab:network_spec} where $\mathfrak{C}@N_i$ denotes programs running on $N_i$. It is trivial to give the proof with the combination of system proofs by \textit{Nest Composition Rule}.

\begin{table*}
\caption{Specification of the network in Example~\ref{eg:nest}.}
\label{tab:network_spec}
\centering
  \begin{tabular}{ccc}
  & $\{ \top, m_0, m_1 \mapsto -, - * x_0, x_1 \mapsto -, - *$ & \\
  & $n_0, n_1 \mapsto -, - *y_0, y_1 \mapsto -, - \}$ &\\
  $\{ \top, m_0, m_1 \mapsto -, - *$ & & $\{ \top, n_0, n_1 \mapsto -, - *$ \\
  $x_0, x_1 \mapsto -, - \}$ & & $y_0, y_1 \mapsto -, - \}$ \\
  $\mathfrak{C}@N_0$ & $\parallel_0$ & $\mathfrak{C}@N_1$ \\
  $P(N_0)'$ & & $P(N_1)'$ \\
  \multicolumn{3}{c}{$\{\top, (c_0!m_0 \triangleleft c_0!m_1) * (c_1?x_0 \triangleleft c_1?x_1 \land \llbracket x_0 \rrbracket = \llbracket n_0 \rrbracket \land \llbracket x_1 \rrbracket = \llbracket n_1 \rrbracket) *$} \\
  \multicolumn{3}{c}{$(c_1!n_0 \triangleleft c_1!n_1) * (c_0?y_0 \triangleleft c_0?y_1 \land \llbracket y_0 \rrbracket = \llbracket m_0 \rrbracket \land \llbracket y_1 \rrbracket = \llbracket m_1 \rrbracket) \}$}
  \end{tabular}
\end{table*}

\end{example}

\section{Language}
The full expressiveness of the ECSL can be exploited when specifying and verifying a complex system that has a nested structure and cross-layer communications. Firstly, we define a programming language with the built-in support of communication actions and nested parallel construction. Furthermore, we define a specification language to formalize the systems constructed by the programming language.

\subsection{Programming Language}
We define a programming language that has powerful expressiveness to construct systems at different abstraction levels.

\begin{definition}[Syntax of Programming Language]
The syntax of the programming language is defined as follow:

$\bar{E} ::= x ~|~ n ~|~ \bar{E}+\bar{E} ~|~ \bar{E}-\bar{E} ~|~ ...$

$\bar{B} ::= \top ~|~ \bot ~|~ \bar{B} \land \bar{B} ~|~ \bar{E}=\bar{E} ~|~ ...$

$\bar{C} ::= \textbf{skip} ~|~ x \gets \bar{E} ~|~ x \gets [\bar{E}] ~|~ [\bar{E}] \gets \bar{E} ~|~ ...$

$\bar{A} ::=  \bar{C} ~|~ \textbf{send}(\bar{E}, \textit{ch}) ~|~ x \gets \textbf{receive}(\textit{ch})$

$\bar{S} ::= \bar{A} ~|~ \bar{S_1};\bar{S_2} ~|~ \textbf{if } \bar{B} \textbf{ then } \bar{A_1} \textbf{ else } \bar{A_2} ~|~ \textbf{while } \bar{B} \textbf{ do } \bar{A}$

$\bar{P} ::= \bar{S} ~|~ \bar{P_1} \parallel \bar{P_2} ~|~ \bar{P_1} \parallel_i \bar{P_2}$
\end{definition}

We omit some usual arithmetic expressions in $\bar{E}$ and Boolean expressions in $\bar{B}$. Commands $\bar{C}$ include elementary actions such as the empty command, assignment command and memory management commands while actions $\bar{A}$ encapsulate communication commands including \textbf{send} and \textbf{receive}. Statements $\bar{S}$ defines the basic program structure. Parallel structures $\bar{P}$ describe the construction of nested systems.

\subsection{Specification Language}
To specify systems developed by the programming language above, we define a specification language.

\begin{definition}[Syntax of Specification Language of ECSL]
The assertion component of the specification language is structured in Definition~\ref{def:spec_lang_csl}. Let $\Phi$ denote a grammar in a temporal logic. The new syntax is defined as follow:

$\check{E} ::= \Phi$

$\check{Q} ::= \check{E}_f, \check{E}_n \land \check{P}$
\end{definition}

\begin{example}
We take linear temporal logic (LTL) as an example. For the syntax of LTL over the set \textit{Prop} of proposition with $Q \in \textit{Prop}$, $\Gamma$ and $\gamma$ of program $\mathfrak{P}$ in system $\mathfrak{W}$ are formed as LTL formulae according to the following grammar:

$\Phi ::= \top ~|~ Q ~|~ \neg \Phi ~|~ \Phi_0 \land \Phi_1 ~|~ \bigcirc \Phi ~|~ \Phi_0 \sqcup \Phi_1$.

Unary prefix operator $\bigcirc$ and binary infix operator $\sqcup$ are temporal modalities. If $\Phi$ holds in the next time step, $\bigcirc \Phi$ holds at the current moment. $\Phi_0 \sqcup \Phi_1$ holds at the current moment, if there is a future moment $i$ for which $\Phi_1$ holds and $\Phi_0$ holds at all moments until moment $i$.
\end{example}

With Definition~\ref{def:program_state}, we can give the operational semantics of this specification language.

\begin{definition}[Semantics of Specifications]
The semantics of specifications contains the semantics of assertions and the environment factor.

The semantics of assertions is given in Definition~\ref{def:sem_CSL} with the only difference that the temporal memory is a new component of the program state according to Definition~\ref{def:program_state}.

The semantics of the environment factor is defined as follows with Theorem~\ref{the:trace_satisfication}:

$(s, h, t) \models (\check{E}_f, \check{E}_n \land \check{P}) \iff \exists h_0, h_1, t_f, t_n: h = h_0 \uplus h_1 \land t = t_f \uplus t_n \land (s, h) \models \check{P} \land \mathcal{T}(t_f) \models \check{E}_f \land \mathcal{T}(t_n) \models \check{E}_n$

\end{definition}

\section{Discussion}
We design ECSL by following unitarity and compatibility, which makes ECSL capable of specifying and verifying systems at different abstraction levels and interpreting the standard CSL and typical variants such as \cite{bell_concurrent_2010,sergey_programming_2017}. We also expand the capability of the CSL to handle the nested architecture compared to the work \cite{krogh-jespersen_aneris_2020}. The most distinguishing feature of ECSL compared to the current work is to focus on tackling typical modularity issues in a unified manner.

Our general idea about proving the soundness of ECSL is to formulate a new semantics of judgements according to \cite{vafeiadis_concurrent_2011} considering an auxiliary predicate $\textit{sate}_n(\alpha, s, h, t, J, P')$, where $\alpha$ denotes the action executing with stack $s$, a heap $h$ and recorded in temporal memory $t$ while the judgement form is defined in Definition~\ref{def:judgement_form}. The soundness can be formulated as follows:

$J \vdash \{ \Gamma, \gamma \land P \}~\alpha~\{ \Gamma, \gamma' \land P' \} \implies J \models \{ \Gamma, \gamma \land P \}~\alpha~\{ \Gamma, \gamma' \land P' \}$.

We can prove soundness by proving each proof rule of ECSL is a sound implication after replacing all $\vdash$ by $\models$ though the proof of environment rules is challenging.

Although we do not address the representation of temporal properties in our examples, we can indeed represent them in temporal logic. For instance, the foreign environment of the pre-condition can be represented as $\Box(c!m \longrightarrow \Diamond c?v)$, meaning that whenever signal $m$ is sent through channel $c$, then $v$ will eventually be assigned with the value of signal $m$ received from channel $c$. 

\section{Conclusion}
Reasoning about complex distributed systems requires great modularity. In this paper, we have proposed ECSL, an extended concurrent separation logic with the temporal extension, communication extension, environment extension, and nest extension. We equip ECSL with spatiotemporal modularity to enrich expressiveness. Besides, ECSL facilitates formalizing systems at different abstraction levels. Particularly, ECSL is capable of formalizing nested systems containing both peer communications and cross-layer communications, which enables ECSL to specify and verify complex systems. Furthermore, ECSL presents unitarity and compatibility, which is implemented in a specification language to formalize systems at different abstraction levels.

Our next research direction is to apply mechanized ECSL into formalizing practical systems, especially decentralized systems such as the consensus algorithm, smart contract, and blockchain design.

%
% ---- Bibliography ----
%
% BibTeX users should specify bibliography style 'splncs04'.
% References will then be sorted and formatted in the correct style.
%
\bibliographystyle{splncs04}
\bibliography{ecsl}

\end{document}